\documentclass[11pt]{article}

\usepackage[margin=1in]{geometry}
\usepackage[T1]{fontenc}

\usepackage{amsmath,amssymb,amsthm,mathtools}
\usepackage{booktabs}
\usepackage{enumitem}
\usepackage[hidelinks]{hyperref}
\hypersetup{
  pdftitle={Fair Division with Strictly Increasing Valuations: A Tight Threshold for Two-Agent EF1 and PO},
  pdfauthor={Nicholas Teh}
}
\usepackage{microtype}
\usepackage{authblk}
\usepackage{xcolor}
\usepackage[sort,numbers]{natbib}
\newtheorem{theorem}{Theorem}[section]
\newtheorem{lemma}[theorem]{Lemma}

\newtheorem{remark}[theorem]{Remark}
\theoremstyle{definition}
\newtheorem{definition}[theorem]{Definition}

\hypersetup{
     colorlinks   = true,
     linkcolor    = red, 
     urlcolor     = teal, 
	 citecolor    = blue 
}

\newcommand{\EFone}{\ensuremath{\text{EF1}}}
\newcommand{\PO}{\ensuremath{\text{PO}}}
\newcommand{\wPO}{\ensuremath{\text{weak\text{-}PO}}}
\newcommand{\calB}{\mathcal{B}}
\newcommand{\calU}{\mathcal{U}}
\newcommand{\calF}{\mathcal{F}}
\newcommand{\cov}{\mathrm{cov}}
\newcommand{\proofpart}[1]{\par\medskip\noindent\textbf{#1}\enspace}

\title{\bfseries Fair Division with Strictly Increasing Valuations: \\ A Tight Threshold for Two-Agent EF1 and PO}
\author{Nicholas Teh}
\affil[]{University of Oxford, UK}
\date{}

\begin{document}
\maketitle

\begin{abstract}
We study whether strictly positive marginal values restore the compatibility of envy-freeness up to one good (\EFone) and Pareto optimality (\PO) for indivisible goods.
For two agents, we identify the exact threshold in the number of goods. Every instance with at most seven goods and strictly increasing valuations admits an allocation that is both \EFone{} and \PO, without any submodularity assumption.
In contrast, we construct an eight-good instance with normalized, integer-valued, strictly increasing, submodular valuations in which every \EFone{} allocation is strictly Pareto dominated. Thus, eight goods are necessary and sufficient for a two-agent counterexample.
Finally, we strengthen the three-agent NP-hardness result of Chandramouleeswaran and Nimbhorkar (2026): deciding whether an \EFone{} and \PO{} allocation exists remains NP-hard for normalized, integer-valued, monotone submodular valuations even when zero marginals are confined to eight fixed agent-good pairs, all involving a single agent.
\end{abstract}

\section{Introduction}

Fair division of indivisible goods is a well-studied problem in economics, operations research, and computer science.
A central objective is to reconcile \emph{fairness} with \emph{economic efficiency}. 
Exact \emph{envy-freeness} is often unattainable: for example, a single desirable good cannot be allocated between two agents without creating envy. 
This motivates relaxations such as \emph{envy-freeness up to one good} (\EFone), which requires that any envy can be eliminated by removing some good from the envied bundle, and the stronger notion of \emph{envy-freeness up to any good} (EFX), which requires this condition to hold after the removal of any good from that bundle. Since a fair allocation may nevertheless admit a Pareto improvement, it is natural to ask whether these fairness guarantees can be achieved together with \emph{Pareto optimality} (\PO).

For additive valuations, an allocation that is both \EFone{} and \PO{} always exists~\cite{CaragiannisKMPPSW19}.
The same is true for matroid-rank valuations~\cite{BCIZ21}.
For general monotone submodular valuations, however, recent work gives counterexamples.
Mackenzie and Suzuki~\cite{MS26} construct a two-agent, eight-good instance in which every \EFone{} allocation is strictly Pareto dominated, while Chandramouleeswaran and Nimbhorkar~\cite{CN26} give a two-agent, six-good coverage instance with no allocation that is both \EFone{} and \PO. Both constructions rely on zero marginal values.
This motivates Open Problem~3.4 of Chandramouleeswaran and Nimbhorkar~\cite{CN26}: 
\begin{quote}
    \emph{Does every instance with strictly positive marginal values admit an allocation that is simultaneously \EFone{} and \PO?}
\end{quote}
They also show that deciding whether an \EFone{} and \PO{} allocation exists is NP-hard for three agents with monotone submodular valuations~\cite[Theorem~3]{CN26}.

Our first contribution answers the open problem negatively and determines the exact two-agent threshold in the number of goods. We modify the eight-good construction of Mackenzie and Suzuki~\cite{MS26} to obtain normalized, integer-valued, strictly increasing, submodular valuations for which every \EFone{} allocation is strictly Pareto dominated. We complement this counterexample with a matching existence result: every two-agent instance with at most seven goods and strictly increasing valuations admits an allocation that is both \EFone{} and \PO. This positive result requires no submodularity assumption. Together, these results show that eight goods are necessary and sufficient for a two-agent counterexample.

Our second contribution strengthens the three-agent hardness result of Chandramouleeswaran and Nimbhorkar~\cite{CN26}. We show that deciding whether an \EFone{} and \PO{} allocation exists remains NP-hard for normalized, integer-valued, monotone submodular valuations even when zero marginals are confined to eight fixed agent-good pairs, all involving a single agent.

Section~\ref{sec:eight} presents the eight-good counterexample. Section~\ref{sec:seven} proves existence for two agents with at most seven goods. Section~\ref{sec:hardness} establishes the restricted NP-hardness result.

\subsection{Related Work}
\EFone{} was formalized by Budish~\cite{Budish2011} and is implicit in the envy cycle elimination algorithm of Lipton et al.~\cite{LiptonMMS04}, which computes an \EFone{} allocation for arbitrary monotone valuations. This general existence guarantee does not impose efficiency. For additive valuations, Caragiannis et al.~\cite{CaragiannisKMPPSW19} showed that every maximum Nash welfare allocation is \EFone{} and \PO. Barman, Krishnamurthy, and Vaish~\cite{BarmanKV18} subsequently gave a pseudopolynomial-time algorithm for computing such an allocation and established the stronger existence of an allocation that is \EFone{} and fractionally Pareto optimal. For a fixed number of agents, Mahara~\cite{Mahara2026} obtained a polynomial-time algorithm for \EFone{} and fractional Pareto optimality. We refer to Amanatidis et al.~\cite{AmanatidisABFLMVW23} for a broader survey of discrete fair division.

Beyond additive valuations, exact compatibility is known for several structured classes. Benabbou et al.~\cite{BCIZ21} showed that matroid-rank valuations admit, in polynomial time, an \EFone{} allocation maximizing utilitarian welfare, and hence an \EFone{} and \PO{} allocation. For binary valuations, Brandl, Suksompong, and Teh~\cite{Brandl2026} characterize the allocation rule induced by maximum Nash welfare and leximin through \EFone{}, strategyproofness, neutrality, and related axioms.

Strictly positive marginal values have also been studied in connection with the stronger EFX requirement. Plaut and Roughgarden~\cite{plaut2020almost} proved that, when all agents have the same monotone valuation and all marginal values are nonzero, a leximin allocation is EFX and \PO. They also showed that, for two agents with distinct general valuations, EFX and \PO{} can be incompatible even when all marginals are nonzero. For identical monotone submodular valuations, Chandramouleeswaran and Nimbhorkar~\cite{CN26} proved that every leximin allocation is \EFone{} and \PO, without requiring strict increase. These identical valuation results do not resolve the heterogeneous setting considered here. In the setting with only a few more goods than agents, Lim, Neoh, and Teh~\cite{Lim2026EFXcost} study the welfare and computational consequences of imposing EFX, including the complexity of achieving EFX together with Pareto optimality.

For broader valuation classes, efficiency becomes harder to reconcile with fairness. Caragiannis et al.~\cite{CaragiannisKMPPSW19} showed that maximum Nash welfare need not be \EFone{} for monotone submodular valuations; their positive guarantee is the weaker \emph{marginal \EFone{}} property. They also showed that \EFone{} and \PO{} may be incompatible for monotone subadditive valuations. Barman and Suzuki~\cite{BarmanSuzuki2026} recover compatibility with a multiplicative $1/2$-approximation to Pareto optimality for subadditive valuations.

Related questions have also been studied for chores and mixed manna. For two agents with additive valuations over goods and chores, Aziz et al.~\cite{aziz2022goodschores} give a polynomial-time algorithm for computing an \EFone{} and \PO{} allocation. For general monotone chore costs, Bhaskar, Sricharan, and Vaish~\cite{BhaskarEtAl2021} give a polynomial-time algorithm for \EFone{}, without an efficiency guarantee. Mahara~\cite{Mahara2026} proved that additive chore instances with any number of agents admit an \EFone{} and fractionally Pareto-optimal allocation, while Teh~\cite{Teh2026} showed that an \EFone{} and \PO{} allocation can be computed efficiently when the number of chores exceeds the number of agents by a constant.
Beyond additivity, incompatibility reappears. Hosseini, Narang, and Wąs~\cite{HosseiniEtAl2025} exhibit a two-agent instance with identical monotone submodular coverage costs, induced by a rooted tree, that has no \EFone{} and \PO{} allocation, and show that deciding existence is NP-hard even for unweighted trees. The six-good coverage construction of Chandramouleeswaran and Nimbhorkar~\cite{CN26} also gives a counterexample for chores.

\section{Preliminaries}

Let $N$ be a finite set of \emph{agents} and let $M$ be a finite set of \emph{indivisible goods}.
Each agent $i\in N$ has a \emph{valuation} function $v_i:2^M\to\mathbb{R}_{\ge 0}$, normalized so that $v_i(\varnothing)=0$.
An \emph{allocation} is a tuple $X=(X_i)_{i\in N}$ whose bundles are pairwise disjoint and whose union is $M$.
Thus, all allocations in this paper are \emph{complete} (i.e., all goods are allocated).

For $S\subseteq M$ and $g\in M\setminus S$, the \emph{marginal value} of $g$ given $S$ is $v_i(g\mid S)=v_i(S\cup\{g\})-v_i(S).$
A valuation function $v$ is \emph{monotone} if $v(S)\le v(T)$ whenever $S\subseteq T$.
It is \emph{strictly increasing} if $v(g\mid S)>0$ for every $S\subseteq M\setminus\{g\}$.
Since $M$ is finite, this is equivalent to $v(S)<v(T)$ whenever $S\subsetneq T$.
Accordingly, ``strictly increasing'' and ``having strictly positive marginals'' refer to the same condition in this paper.\footnote{This is exactly the condition in Open Problem~3.4 of Chandramouleeswaran and Nimbhorkar~\cite{CN26}.}
A valuation is \emph{submodular} if it has diminishing marginal returns, i.e., $v(g\mid S)\ge v(g\mid T)$ whenever $S\subseteq T\subseteq M\setminus\{g\}$.\footnote{For the computational result in Section~\ref{sec:hardness}, each valuation function is represented by a polynomial-size description from which $v_i(S)$ can be evaluated in polynomial time.}

Following Budish~\cite{Budish2011} and Lipton et al.~\cite{LiptonMMS04}, we use \EFone{} as our fairness criterion.

\begin{definition}[\EFone]
An allocation $X=(X_i)_{i\in N}$ is \emph{envy-free up to one good} (\EFone) if, for every ordered pair of distinct agents $i,j\in N$, either $X_j=\varnothing$, or there exists $g\in X_j$ such that $v_i(X_i)\ge v_i(X_j\setminus\{g\})$.
\end{definition}

We measure efficiency using the standard notions of Pareto dominance and Pareto optimality; see, e.g., Caragiannis et al.~\cite{CaragiannisKMPPSW19} and Barman, Krishnamurthy, and Vaish~\cite{BarmanKV18}.

\begin{definition}[Pareto notions]
An allocation $Y$ \emph{Pareto dominates} $X$ if $v_i(Y_i)\ge v_i(X_i)$ for every $i\in N$, with at least one strict inequality.
It \emph{strictly Pareto dominates} $X$ if $v_i(Y_i)>v_i(X_i)$ for every $i\in N$.
An allocation is \emph{Pareto optimal} (\PO) if it is not Pareto dominated.
It is \emph{weakly Pareto optimal} (\wPO) if it is not strictly Pareto dominated.
Thus \PO{} implies \wPO.
\end{definition}

\section{An Eight-Good Counterexample with Strictly Positive Marginals}\label{sec:eight}

We first establish the negative side of our two-agent threshold result by constructing an eight-good instance with strictly positive marginal values in which \EFone{} and Pareto efficiency are incompatible. The construction is obtained by perturbing the example of Mackenzie and Suzuki~\cite{MS26}: we fix $\varepsilon=1/6$, scale all values by $12$, and add $|S|$ to the value of every bundle $S$. The additive term makes every marginal value strictly positive, while the scaling ensures that the strict \EFone{} violations and strict Pareto improvements in the original instance survive the perturbation. We verify all required properties directly from the resulting integer-valued tables, so the proof is self-contained.

Let $M=A\mathbin{\dot\cup}B$, where $A=\{a_1,a_2,a_3\}$ and $B=\{b_1,b_2,b_3,b_4,b_5\}$.
For a bundle $S\subseteq M$, let $x(S)=|S\cap A|$ and $y(S)=|S\cap B|$.
Each agent's value for a bundle depends only on these two type counts.

\begin{equation}
\begin{array}{c|rrrrrr}
 u_1(x,y) & y=0&y=1&y=2&y=3&y=4&y=5\\ \hline
 x=0&0&13&26&39&46&53\\
 x=1&25&38&47&50&53&54\\
 x=2&28&41&50&53&54&55\\
 x=3&29&42&51&54&55&56
\end{array}
\label{eq:u1-table}
\end{equation}

\begin{equation}
\begin{array}{c|rrrrrr}
 u_2(x,y) & y=0&y=1&y=2&y=3&y=4&y=5\\ \hline
 x=0&0&13&26&39&50&51\\
 x=1&17&28&39&46&53&54\\
 x=2&32&39&46&53&54&55\\
 x=3&47&50&53&54&55&56
\end{array}
\label{eq:u2-table}
\end{equation}

For each $i\in\{1,2\}$, define $v_i(S)=u_i(x(S),y(S))$ for every $S \subseteq M$. Our result is as follows.

\begin{theorem}\label{thm:upper}
    The valuations $v_1$ and $v_2$ are normalized, integer-valued, strictly increasing, and submodular.
    Every \EFone{} allocation is strictly Pareto dominated.
    Consequently, no allocation is simultaneously \EFone{} and \wPO, and in particular no allocation is simultaneously \EFone{} and \PO.
\end{theorem}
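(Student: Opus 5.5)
The plan is a direct verification from the two tables, organized so that every check reduces to finitely many comparisons of entries of $u_1$ and $u_2$.

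\textbf{Valuation properties.} Normalization and integrality can be read off the tables ($u_i(0,0)=0$, all entries integers). Since $v_i$ depends only on $(x,y)$, adding a good of type $A$ moves $(x,y)\mapsto(x+1,y)$ and adding a good of type $B$ moves $(x,y)\mapsto(x,y+1)$. Strict increase is therefore equivalent to each table being strictly increasing along every row and every column, which I would check entry by entry.

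For submodularity, I would use the standard lattice characterization for functions of type counts. A function $v(S)=u(x(S),y(S))$ is submodular iff the following three conditions hold.
\begin{itemize}[nosep]
\item[(i)] The $A$-marginal $u(x+1,y)-u(x,y)$ is nonincreasing in $x$ and in $y$.
\item[(ii)] The $B$-marginal $u(x,y+1)-u(x,y)$ is nonincreasing in $x$ and in $y$.
\end{itemize}
The cross conditions (monotonicity of the $A$-marginal in $y$, and of the $B$-marginal in $x$) coincide, since both say $u(x+1,y+1)-u(x,y+1)-u(x+1,y)+u(x,y)\le 0$. So it suffices to tabulate the two difference tables $\Delta_A$ and $\Delta_B$ for each agent and check that they are monotone. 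For example, row $x=0$ of $u_1$ gives $B$-marginals $13,13,13,7,7$, which are nonincreasing. Diminishing returns for arbitrary $S\subseteq T$ then follows by chaining single-good steps.

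\textbf{EF1 allocations are strictly dominated.} An allocation is determined up to value by $(x,y)=(x(X_1),y(X_1))$, with agent~2 receiving $(3-x,5-y)$. This gives $4\times 6=24$ profiles. For each profile I would test \EFone{} for both agents. Removing a good from the other bundle means decreasing that bundle's $x$ or $y$ by one, and by monotonicity it is optimal to remove whichever type reduces the value most. This yields the list of \EFone{} profiles. For each such profile I would exhibit an explicit profile $(x',y')$ with $u_1(x',y')>u_1(x,y)$ and $u_2(3-x',5-y')>u_2(3-x,5-y)$. Following Mackenzie--Suzuki, the natural candidate improvements are trades in which agent~2 takes more $A$-goods (which it values highly in bulk, e.g.\ $u_2(3,\cdot)$) while agent~1 takes $B$-goods (which it values linearly at $13$ up to three).

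Finally, strict domination implies that no \EFone{} allocation is \wPO. Since \PO{} implies \wPO, no allocation is \EFone{} and \PO.

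\textbf{Main obstacle.} I expect the main obstacle to be the bookkeeping in the second part: correctly enumerating the \EFone{} profiles and finding a witness for each. I would organize it as a table with one row per \EFone{} profile $(x,y)$, listing the witness $(x',y')$ and the four relevant values, so that each strict inequality can be checked at a glance.
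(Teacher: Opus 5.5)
Your plan is correct and is essentially the paper's proof. The paper verifies strict increase and submodularity through the positivity and row/column monotonicity of the $\Delta_A$ and $\Delta_B$ arrays (Lemma~\ref{lem:count-criterion}), enumerates the $24$ count splits to find that the only \EFone{} splits are $(0,4),(1,3),(2,2),(3,1)$, and exhibits the dominating splits $(1,2)$ with utilities $(47,53)$ and $(0,5)$ with utilities $(53,47)$. One caution for your search: the trade direction you suggest cannot work at $(0,4)$, since agent~2 already holds all of $A$, and the witness $(1,2)$ instead has agent~2 giving up an $A$-good for two $B$-goods, so search all $24$ splits rather than only those trades.
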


We use the following elementary criterion for count-based valuations.

\begin{lemma}\label{lem:count-criterion}
Let $M=A\mathbin{\dot\cup}B$, where $|A|=p$ and $|B|=q$, and let $v(S)=f(|S\cap A|,|S\cap B|)$.
Define the $A$- and $B$-marginal arrays by
\begin{align*}
\Delta_A f(x,y)&=f(x+1,y)-f(x,y)
&& (0\le x<p,\ 0\le y\le q),\\
\Delta_B f(x,y)&=f(x,y+1)-f(x,y)
&& (0\le x\le p,\ 0\le y<q).
\end{align*}
Suppose every entry of both arrays is positive and each array is coordinatewise nonincreasing: whenever both entries are defined and $x\le x'$, $y\le y'$,
\begin{equation*}
\Delta_A f(x,y)\ge \Delta_A f(x',y')
\quad\text{and}\quad
\Delta_B f(x,y)\ge \Delta_B f(x',y').
\end{equation*}
Then $v$ is strictly increasing and submodular.
\end{lemma}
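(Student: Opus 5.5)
The plan is to reduce both conclusions to statements about single-good marginals. For a set $S$ and a good $g\notin S$, the marginal $v(g\mid S)$ depends only on the type of $g$ and the counts $(x,y)=(|S\cap A|,|S\cap B|)$. If $g\in A$, then $g\notin S$ forces $x<p$, and
\[
v(g\mid S)=f(x+1,y)-f(x,y)=\Delta_A f(x,y).
\]
If $g\in B$, then $y<q$ and $v(g\mid S)=\Delta_B f(x,y)$. In both cases the relevant array entry is defined.

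\textbf{Strict increase.} Every entry of both arrays is positive. Hence $v(g\mid S)>0$ for every $S\subseteq M\setminus\{g\}$, which is exactly the definition of strictly increasing. Normalization is not part of the claim and is not needed here.

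\textbf{Submodularity.} Fix $S\subseteq T\subseteq M\setminus\{g\}$, and let the counts of $S$ be $(x,y)$ and those of $T$ be $(x',y')$.
\begin{enumerate}[label=(\roman*)]
\item Since $S\subseteq T$, we have $x\le x'$ and $y\le y'$.
\item Since $g\notin T$, the entry at $(x',y')$ of the array for $g$'s type is defined. If $g\in A$, then $x'<p$; if $g\in B$, then $y'<q$. Because $x\le x'$ and $y\le y'$, the same bound holds at $(x,y)$, so that entry is defined too.
\item Coordinatewise nonincreasingness now gives $\Delta f(x,y)\ge\Delta f(x',y')$ for the array of $g$'s type. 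This is exactly $v(g\mid S)\ge v(g\mid T)$.
\end{enumerate}

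\textbf{Main point of care.} The only real obstacle is step (ii). The hypothesis compares entries only "whenever both are defined", so we must check that both the $(x,y)$ and $(x',y')$ entries exist. This follows from $g\notin T$ together with the count inequalities from $S\subseteq T$. Everything else is direct unwinding of the definitions.
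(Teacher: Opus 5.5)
Your proof is correct and follows the same route as the paper: you identify $v(g\mid S)$ with the array entry at the type counts of $S$, then obtain strict increase from positivity and submodularity from coordinatewise monotonicity. Your explicit check that both array entries are defined (using $g\notin T$ together with $x\le x'$, $y\le y'$) is a detail the paper leaves implicit, and you handle it correctly.
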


\begin{proof}
At a bundle with type counts $(x,y)$, the marginal value of any unallocated $A$-good is $\Delta_A f(x,y)$, and the marginal value of any unallocated $B$-good is $\Delta_B f(x,y)$.
Positivity therefore gives strict increase.

For submodularity, let $S\subseteq T$ and $g\notin T$.
If $g\in A$, let $(x,y)$ and $(x',y')$ be the type counts of $S$ and $T$, respectively.
Then $x\le x'$ and $y\le y'$, so the assumed coordinatewise monotonicity gives
\begin{equation*}
v(g\mid S)=\Delta_A f(x,y)
\ge
\Delta_A f(x',y')=v(g\mid T).
\end{equation*}
The argument for $g\in B$ is identical.
\end{proof}

\begin{proof}[Proof of Theorem~\ref{thm:upper}]
\proofpart{Valuation properties.}
The tables give $v_i(\varnothing)=u_i(0,0)=0$, and all entries are integers.
Appendix~\ref{app:marginals} displays every value of $\Delta_A u_i$ and $\Delta_B u_i$.
Each entry is at least $1$, and in each array every row and every column is nonincreasing.
Lemma~\ref{lem:count-criterion} therefore proves that both valuations are strictly increasing and submodular.

\proofpart{\EFone{} splits.}
Represent an allocation by a \emph{count split} $(x,y)$: agent~1 receives $x$ goods from $A$ and $y$ goods from $B$, while agent~2 receives the remaining $(3-x,5-y)$ goods.
Since the agents are indifferent among goods of the same type, all allocations with the same count split have the same utilities and the same \EFone{} status.

Consider agent~1's comparison with agent~2.
Deleting an $A$-good from agent~2's bundle, when possible, leaves type counts $(2-x,5-y)$; deleting a $B$-good, when possible, leaves $(3-x,4-y)$.
Let $d_1(x,y)$ be the smaller $u_1$-value among the applicable deletion outcomes.
When agent~2's bundle is nonempty, agent~1 satisfies \EFone{} exactly when
\begin{equation}
u_1(x,y)\ge d_1(x,y).
\label{eq:agent1-ef1}
\end{equation}
If agent~2's bundle is empty, this comparison is automatic.

Similarly, deleting an $A$-good from agent~1's bundle leaves counts $(x-1,y)$, and deleting a $B$-good leaves $(x,y-1)$.
Let $d_2(x,y)$ be the smaller $u_2$-value among the applicable outcomes.
When agent~1's bundle is nonempty, agent~2 satisfies \EFone{} exactly when
\begin{equation}
u_2(3-x,5-y)\ge d_2(x,y).
\label{eq:agent2-ef1}
\end{equation}
If agent~1's bundle is empty, this comparison is automatic.

For each nonautomatic comparison, define
\begin{equation*}
\sigma_1(x,y)=u_1(x,y)-d_1(x,y),
\quad
\sigma_2(x,y)=u_2(3-x,5-y)-d_2(x,y).
\end{equation*}
Thus the relevant agent satisfies \EFone{} if and only if the corresponding $\sigma_i(x,y)$ is nonnegative.
Substituting the entries of \eqref{eq:u1-table}--\eqref{eq:u2-table} yields the following classification.
An entry $E$ marks a split that is \EFone{} for both agents; an entry $1$ or $2$ identifies the unique agent who violates \EFone{} at that split.

\begin{equation}
\begin{array}{c|rrrrrr}
 &y=0&y=1&y=2&y=3&y=4&y=5\\ \hline
x=0&1&1&1&1&E&2\\
x=1&1&1&1&E&2&2\\
x=2&1&1&E&2&2&2\\
x=3&1&E&2&2&2&2
\end{array}
\label{eq:ef1-classification}
\end{equation}

Appendix~\ref{app:ef1-differences} gives the underlying values of $\sigma_1$ and $\sigma_2$.
The only \EFone{} count splits are therefore
\begin{equation}
(0,4),\quad (1,3),\quad (2,2),\quad (3,1).
\label{eq:ef1-splits}
\end{equation}
At each of these splits, both agents receive four goods.

\proofpart{Strict Pareto domination.}
At split $(x,y)$, the utility pair is $(u_1(x,y),u_2(3-x,5-y))$.
For every \EFone{} split, the following table exhibits another count split that strictly improves both coordinates.

\begin{equation}
\begin{array}{c@{\quad}c@{\quad}c@{\quad}c}
\toprule
\text{\EFone{} split}&\text{utility pair}&\text{dominating split}&\text{new utility pair}\\
\midrule
(0,4)&(46,50)&(1,2)&(47,53)\\
(1,3)&(50,46)&(0,5)&(53,47)\\
(2,2)&(50,46)&(0,5)&(53,47)\\
(3,1)&(42,50)&(1,2)&(47,53)\\
\bottomrule
\end{array}
\label{eq:dominating-splits}
\end{equation}
Hence every \EFone{} allocation is strictly Pareto dominated and therefore fails \wPO.
\end{proof}

\begin{remark}[How the example is obtained]\label{rem:construction}
Let $g_i$ denote agent $i$'s valuation in~\cite[Theorem~3.1]{MS26} with $\varepsilon=1/6$.
The integer tables satisfy $v_i(S)=12g_i(S)+|S|$.
The term $|S|$ adds exactly $1$ to the marginal value of every good.

In the original construction, \EFone{} holds exactly at the balanced $4$--$4$ splits.
At every other split, the agent who violates \EFone{} holds at most three goods, while the envied bundle still contains at least four goods after one deletion.
Adding $|S|$ therefore decreases the violating agent's own-minus-envied comparison by at least one.
At a balanced split, the comparison is between a four-good own bundle and a three-good bundle after deletion, so the same term increases the comparison by one and preserves \EFone{}.

Each displayed Pareto improvement changes the bundle sizes from $4$--$4$ to $3$--$5$, with the identity of the agent receiving the smaller bundle depending on the split.
After multiplication by $12$, each increase is at least $2$.
The agent whose bundle shrinks loses only one unit from the $|S|$ term, while the other agent gains one unit.
Both agents therefore still improve strictly.
The direct verification above and in Appendices~\ref{app:marginals}--\ref{app:ef1-differences} makes the transformed instance independently checkable.
\end{remark}

Every marginal in Theorem~\ref{thm:upper} is positive, yet the instance has no \EFone{} and \PO{} allocation, thus provides a negative answer to an open problem of Chandramouleeswaran and Nimbhorkar~\cite{CN26}, that strictly positive marginal values do not guarantee an allocation that is simultaneously \EFone{} and \PO, even for two agents with submodular valuations.

\section{Existence for Two Agents with at Most Seven Goods}\label{sec:seven}

Having shown that eight goods suffice for nonexistence, we now prove that the construction of Section~\ref{sec:eight} is minimal. Specifically, every two-agent instance with at most seven goods and strictly increasing valuations admits an allocation that is both \EFone{} and \PO. Notably, this positive result does not require submodularity.

The proof associates with each agent a threshold separating bundles that may violate \EFone{} from those that are automatically safe. We then show that, unless the ground set contains at least eight goods, the two agents can be assigned disjoint bundles lying above their respective thresholds. Such an allocation is \EFone{}, and a Pareto improvement argument then yields an allocation that is simultaneously \EFone{} and \PO.

The following terminology is specific to the two-agent setting. A bundle is called \EFone-violating only with respect to the allocation in which the other agent receives its complement.

\begin{definition}[\EFone{}-violating bundles and the violation threshold]\label{def:violating-threshold}
Fix a valuation $v$ on $M$.
For $S\subseteq M$, write $T=M\setminus S$.
We call $S$ \emph{\EFone{}-violating} for $v$ if $T\ne\varnothing$ and
\begin{equation}
v(S)<v(T\setminus\{g\})
\quad
\text{for every }g\in T.
\label{eq:violating-bundle}
\end{equation}
Equivalently, an agent receiving $S$ in the allocation $(S,T)$ violates \EFone{}.

Let $\calB(v)$ be the family of \EFone{}-violating bundles.
When $\calB(v)\ne\varnothing$, define $\tau(v)=\max_{S\in\calB(v)}v(S)$ and $\calU(v)=\{R\subseteq M:v(R)>\tau(v)\}$.
No bundle in $\calU(v)$ is \EFone{}-violating.
\end{definition}

\begin{lemma}\label{lem:threshold-families}
Let $v$ be strictly increasing and suppose $\calB(v)\ne\varnothing$.
Choose $S\in\calB(v)$ with $v(S)=\tau(v)$, and let $T=M\setminus S$.
Then $|T|\ge2$, the family $\calU(v)$ is closed under supersets, and
\begin{equation}
\calF(S,T)
=
\{S\cup\{g\}:g\in T\}
\cup
\{T\setminus\{g\}:g\in T\}
\subseteq \calU(v).
\label{eq:threshold-family}
\end{equation}
\end{lemma}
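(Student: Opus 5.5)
We prove the three claims in turn; each uses only Definition~\ref{def:violating-threshold} and strict increase. The one subtle point is in the last claim, for bundles of the form $T\setminus\{g\}$.

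\textbf{$|T|\ge 2$.} Since $S\in\calB(v)$, we have $T\ne\varnothing$. Suppose $|T|=1$, say $T=\{g\}$. Condition~\eqref{eq:violating-bundle} would then read $v(S)<v(\varnothing)=0$. This is impossible, because $v\ge 0$. Hence $|T|\ge 2$.

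\textbf{$\calU(v)$ is closed under supersets.} Let $R\in\calU(v)$ and $R\subseteq R'$. Strict increase (indeed, monotonicity suffices) gives $v(R')\ge v(R)>\tau(v)$, so $R'\in\calU(v)$.

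\textbf{The bundles $S\cup\{g\}$.} Fix $g\in T$. Strict increase gives
\[
v(S\cup\{g\})>v(S)=\tau(v),
\]
so $S\cup\{g\}\in\calU(v)$.

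\textbf{The bundles $T\setminus\{g\}$.} Fix $g\in T$. Since $S$ is \EFone{}-violating, \eqref{eq:violating-bundle} gives
\[
v(T\setminus\{g\})>v(S)=\tau(v),
\]
so $T\setminus\{g\}\in\calU(v)$. Here the maximality of $S$ is used only through the equality $v(S)=\tau(v)$.

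Combining the last two claims gives $\calF(S,T)\subseteq\calU(v)$, which completes the proof. None of the steps is a real obstacle. The only care needed is to recall that membership in $\calU(v)$ is defined by exceeding $\tau(v)$, and that $v(S)$ attains $\tau(v)$ by the choice of $S$.
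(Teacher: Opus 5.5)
Your proof is correct and follows the paper's argument step for step. The only cosmetic difference is in ruling out $|T|=1$: you derive the contradiction $v(S)<v(\varnothing)=0$ from nonnegativity of $v$, while the paper uses monotonicity with $\varnothing\subseteq S$, and both are valid here.
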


\begin{proof}
If $|T|=1$, say $T=\{g\}$, then \eqref{eq:violating-bundle} gives $v(S)<v(\varnothing)$,
contradicting monotonicity because $\varnothing\subseteq S$.
Thus $|T|\ge2$.

If $R\in\calU(v)$ and $R\subseteq R'$, monotonicity gives $v(R')\ge v(R)>\tau(v)$, so $R'\in\calU(v)$.
Hence $\calU(v)$ is closed under supersets.

Finally, for each $g\in T$, strict increase gives $v(S\cup\{g\})>v(S)=\tau(v)$, while \eqref{eq:violating-bundle} gives $v(T\setminus\{g\})>v(S)=\tau(v)$.
Both kinds of bundles in \eqref{eq:threshold-family} therefore belong to $\calU(v)$.
\end{proof}

Two set families $\mathcal G_1,\mathcal G_2\subseteq 2^M$ are \emph{cross-intersecting} if every set in $\mathcal G_1$ intersects every set in $\mathcal G_2$.
Cross-intersection is exactly the obstruction that prevents one from choosing disjoint bundles, one from each family.
The next lemma shows that the structured families in Lemma~\ref{lem:threshold-families} can have this obstruction only on a ground set of at least eight goods.

\begin{lemma}\label{lem:cross-count}
Let $M=S_1\mathbin{\dot\cup}T_1=S_2\mathbin{\dot\cup}T_2$ for any $|T_1|,|T_2|\ge2$ and define $\calF_i=\calF(S_i,T_i)$ as in \eqref{eq:threshold-family}.
Let $p,q,r,s$ be the sizes of the four cells formed by the two bipartitions:
\begin{equation}
\begin{array}{c|cc}
 & S_2 & T_2\\ \hline
S_1 & p=|S_1\cap S_2| & q=|S_1\cap T_2|\\
T_1 & r=|T_1\cap S_2| & s=|T_1\cap T_2|
\end{array}.
\label{eq:four-cells}
\end{equation}
Then $\calF_1$ and $\calF_2$ are cross-intersecting if and only if
\begin{equation}
p\ge1,\quad q\ge2,\quad r\ge2,\quad s\ge3.
\label{eq:cell-bounds}
\end{equation}
In particular, cross-intersection implies $|M|\ge8$.
\end{lemma}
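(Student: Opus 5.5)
The plan is to split cross-intersection into the four kinds of pairs $(X,Y)\in\calF_1\times\calF_2$. Each member of $\calF_i$ is either of the form $S_i\cup\{g\}$ or $T_i\setminus\{g\}$, so there are four pair types. For each type I will find, in terms of $p,q,r,s$, exactly when some pair of that type is disjoint. Then $\calF_1,\calF_2$ are cross-intersecting if and only if no type admits a disjoint pair. The hypothesis $|T_i|\ge2$ guarantees that every family involved is nonempty and that the witnesses chosen below exist. The ``in particular'' clause then follows at once, since $|M|=p+q+r+s\ge1+2+2+3=8$.

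\textbf{Type $(T_1\setminus\{g\},T_2\setminus\{h\})$.} The intersection is $(T_1\cap T_2)\setminus\{g,h\}$, with $g\in T_1$ and $h\in T_2$ chosen freely. If $s\le2$, I will choose $g,h$ to cover $T_1\cap T_2$, giving a disjoint pair. If $s\ge3$, the intersection always survives. So this type requires exactly $s\ge3$.

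\textbf{Type $(S_1\cup\{g\},T_2\setminus\{h\})$.} Disjointness requires $S_1\cap T_2\subseteq\{h\}$ and $g\notin T_2\setminus\{h\}$.
\begin{itemize}
\item If $q=0$, pick any $g\in T_1$. If $g\in S_2$ any $h$ works, and if $g\in T_2$ take $h=g$. So a disjoint pair always exists.
\item If $q=1$, then $h$ is forced to be the unique element of $S_1\cap T_2$. That element is not in $T_1$, so $g\ne h$, and $g$ must lie in $T_1\cap S_2$. A disjoint pair therefore exists iff $r\ge1$.
\item If $q\ge2$, a disjoint pair is impossible.
\end{itemize}
Hence this type requires $q\ge2$ or $(q=1,r=0)$.

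\textbf{Type $(T_1\setminus\{g\},S_2\cup\{h\})$.} By the symmetric argument, this type requires $r\ge2$ or $(r=1,q=0)$. Combining with the previous type: $q=1,r=0$ violates the $r$-condition, and $r=1,q=0$ violates the $q$-condition. So the two mixed types together force exactly $q\ge2$ and $r\ge2$.

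\textbf{Type $(S_1\cup\{g\},S_2\cup\{h\})$.} This type requires $p=0$ for a disjoint pair. Moreover $g$ must avoid $S_2\cup\{h\}$ and $h$ must avoid $S_1\cup\{g\}$. This forces $g$ and $h$ to be distinct elements of $T_1\cap T_2$. So a disjoint pair exists iff $p=0$ and $s\ge2$. Given $s\ge3$ from the first type, this type requires exactly $p\ge1$.

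The main obstacle is the bookkeeping in the two mixed types. There the boundary cases $q=1$ or $r=1$ interact with the other mixed type, and one must check that the escape conditions ($q=1,r=0$ and $r=1,q=0$) are each killed by the other type. Once all four conditions are assembled, both directions of the equivalence follow, since each type's analysis was an exact characterization.
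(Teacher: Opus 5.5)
Your proof is correct and takes essentially the paper's route: it uses the same four pair types and the same disjoint witnesses, and in the $q=1$ case the paper's contradiction via $S_1=S_2\cup\{x\}\in\calF_2$ is exactly your observation that $r=0$ is ruled out by the other mixed type. The only difference is organizational: you characterize each pair type exactly, which gives both directions at once and handles $q=0$ without first invoking $s\ge3$, whereas the paper proves necessity and sufficiency separately.
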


\begin{proof}
Assume first that $\calF_1$ and $\calF_2$ are cross-intersecting.

We begin with $s$.
If $s\le2$, choose $g\in T_1$ and $h\in T_2$ so that $T_1\cap T_2\subseteq\{g,h\}$.
If the intersection is empty, choose $g$ and $h$ arbitrarily.
If it has one element, choose that element for both.
If it has two elements, choose one as $g$ and the other as $h$.
In every case, $(T_1\setminus\{g\})\cap(T_2\setminus\{h\})=\varnothing$, contradicting cross-intersection.
Hence $s\ge3$.

Next suppose $p=0$.
Choose distinct $g,h\in T_1\cap T_2$, which is possible because $s\ge3$.
The goods $g$ and $h$ lie outside both $S_1$ and $S_2$, and $g\ne h$.
Therefore, $(S_1\cup\{g\})\cap(S_2\cup\{h\})=\varnothing$,
again a contradiction.
Thus $p\ge1$.

We now prove $q\ge2$.
If $q=0$, choose $x\in T_1\cap T_2$.
Then $(S_1\cup\{x\})\cap(T_2\setminus\{x\})=\varnothing$, contradicting cross-intersection.

Suppose instead that $q=1$, and write $S_1\cap T_2=\{x\}$.
The set $T_2\setminus\{x\}$ is disjoint from $S_1$.
For every $g\in T_1$, it must nevertheless intersect $S_1\cup\{g\}$.
The only possible intersection point is $g$, so $T_1\subseteq T_2\setminus\{x\}$.
It follows that $T_1\cap S_2=\varnothing$, hence $S_2\subseteq S_1$.
Moreover, $S_1\setminus S_2=S_1\cap T_2=\{x\}$, so $S_1=S_2\cup\{x\}\in\calF_2$.
But $S_1$ is disjoint from every set $T_1\setminus\{g\}\in\calF_1$, a contradiction.
Therefore $q\ge2$.
By symmetry, $r\ge2$.
This proves the necessity of \eqref{eq:cell-bounds}.

Conversely, assume \eqref{eq:cell-bounds}.
Two sets of the forms $S_1\cup\{g\}$ and $S_2\cup\{h\}$ intersect in $S_1\cap S_2$, which is nonempty because $p\ge1$.
A set $S_1\cup\{g\}$ intersects every $T_2\setminus\{h\}$ because deleting one good cannot remove all $q\ge2$ goods in $S_1\cap T_2$.
The symmetric statement follows from $r\ge2$.
Finally, $(T_1\setminus\{g\})\cap(T_2\setminus\{h\})$ is obtained from $T_1\cap T_2$ by deleting at most two goods, and is therefore nonempty because $s\ge3$.
Thus $\calF_1$ and $\calF_2$ are cross-intersecting.

The four cells in \eqref{eq:four-cells} partition $M$.
Hence cross-intersection implies
\begin{equation*}
|M|=p+q+r+s\ge1+2+2+3=8. \qedhere
\end{equation*}
\end{proof}

\begin{theorem}\label{thm:lower}
Every two-agent instance with at most seven goods and strictly increasing valuations admits an allocation that is simultaneously \EFone{} and \PO.
No submodularity assumption is needed.
\end{theorem}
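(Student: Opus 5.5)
We proceed in two stages: first we exhibit an \EFone{} allocation that lies in a convenient upward-closed region of utility pairs, and then we Pareto-improve within that region. Let $|M|\le7$ and $N=\{1,2\}$.

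\textbf{Defining the safe regions.} For each agent $i$, if $\calB(v_i)=\varnothing$ we set $\calU_i=2^M$. Otherwise we fix a maximizer $S_i\in\calB(v_i)$ with $v_i(S_i)=\tau(v_i)$, put $T_i=M\setminus S_i$, and set $\calU_i=\calU(v_i)$. By Lemma~\ref{lem:threshold-families}, $|T_i|\ge2$ and $\calF(S_i,T_i)\subseteq\calU_i$. A bundle in $\calU_i$ is not \EFone{}-violating for agent $i$, so any allocation $(X_1,X_2)$ with $X_i\in\calU_i$ for both $i$ is \EFone{}.

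\textbf{Existence of an allocation in both regions.} We claim there are disjoint $R_1\in\calU_1$ and $R_2\in\calU_2$. If both $\calB(v_i)$ are nonempty, then since $|M|\le7$, Lemma~\ref{lem:cross-count} shows that $\calF_1$ and $\calF_2$ are not cross-intersecting, so disjoint members exist and lie in $\calU_1,\calU_2$. If, say, $\calU_1=2^M$, we take $R_1=\varnothing$ and $R_2=M$. Note that $M\in\calU_2$ in all cases: when $\calB(v_2)\ne\varnothing$, we have $\tau(v_2)=v_2(S_2)<v_2(M)$ by strict increase, since $T_2\neq\varnothing$. Because $\calU_i$ is closed under supersets, we can extend $R_2$ to $M\setminus R_1$, giving a complete allocation $X=(R_1,M\setminus R_1)$ with $X_i\in\calU_i$ for both $i$.

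\textbf{Pareto improvement.} Among all complete allocations $Y$ with $Y_i\in\calU_i$ for both $i$, a nonempty finite set, we choose one that maximizes $v_1(Y_1)+v_2(Y_2)$; lexicographic maximization would also work. This $Y$ is \EFone{} by the observation above. It is also \PO{}: if $Z$ Pareto dominated $Y$, then $v_i(Z_i)\ge v_i(Y_i)>\tau(v_i)$, or trivially $Z_i\in\calU_i=2^M$, so $Z_i\in\calU_i$ for both agents. Thus $Z$ lies in the same feasible set, and its strictly larger utilitarian sum contradicts the choice of $Y$. Here it matters that $\calU_i$ is defined by a value threshold, so it is upward closed in utility and not merely under set inclusion.

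The main obstacle is the combinatorial step, which is already handled by Lemma~\ref{lem:cross-count}. The only remaining care is the degenerate cases: an empty $\calB(v_i)$, and checking that the disjoint bundles from $\calF_1,\calF_2$ can be completed, which superset closure ensures.
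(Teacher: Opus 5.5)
Your proposal is correct and follows essentially the same route as the paper. Both arguments use the threshold families $\calU_i$, apply Lemma~\ref{lem:cross-count} to obtain disjoint members of $\calF_1$ and $\calF_2$ (completed to an allocation via superset closure), and then take a utilitarian-sum maximizer over an upward-closed utility region to secure \PO{} without losing \EFone{}. The differences are cosmetic. You apply the lemma directly rather than by contradiction. You handle $\calB(v_i)=\varnothing$ via the convention $\calU_i=2^M$ instead of treating $|M|\le1$ separately. You maximize over $\{Y : Y_i\in\calU_i \text{ for } i=1,2\}$ rather than over allocations weakly dominating $X$.
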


\begin{proof}
If $|M|\le1$, the claim is immediate: every allocation is \EFone{}, and assigning the only good, if any, to either agent is \PO.
Assume henceforth that $2\le |M|\le7$.

For either agent $i$, the empty bundle is \EFone{}-violating.
Indeed, for every $g\in M$, the bundle $M\setminus\{g\}$ is nonempty, so strict increase and normalization give $v_i(M\setminus\{g\})>v_i(\varnothing)=0$.
Thus $\calB(v_i)\ne\varnothing$ for $i=1,2$.

For each agent $i$, choose $S_i\in\calB(v_i)$ with $v_i(S_i)=\tau(v_i)$ and $T_i=M\setminus S_i$.
Let $\calU_i=\calU(v_i)$ and $\calF_i=\calF(S_i,T_i)$.
By Lemma~\ref{lem:threshold-families}, $\calF_i\subseteq\calU_i$ and $|T_i|\ge2$.

Suppose, for a contradiction, that no allocation gives both agents a bundle in their respective families $\calU_i$.
Then $\calU_1$ and $\calU_2$ must be cross-intersecting.
To see this, suppose $R_1\in\calU_1$ and $R_2\in\calU_2$ were disjoint.
Assign $R_2$ to agent~2 and assign all remaining goods to agent~1.
Agent~1 then receives $M\setminus R_2\supseteq R_1$ which belongs to $\calU_1$ because $\calU_1$ is closed under supersets.
This would be an allocation with both bundles above their thresholds, contrary to the assumption.

Hence the subfamilies $\calF_1$ and $\calF_2$ are also cross-intersecting.
Lemma~\ref{lem:cross-count} would then imply $|M|\ge8$, contradicting $|M|\le7$.
Therefore there exists an allocation $X=(X_1,X_2)$ such that
\begin{equation}
v_i(X_i)>\tau(v_i)
\quad (i=1,2).
\label{eq:above-threshold-allocation}
\end{equation}
Neither $X_i$ is \EFone{}-violating, so $X$ is \EFone{}.

It remains to impose Pareto optimality without crossing either threshold.
Among all allocations $Z$ satisfying $v_i(Z_i)\ge v_i(X_i)$ for all $i = 1,2$, choose one, say $Y$, that is Pareto maximal within this finite collection.
If some allocation $W$ Pareto dominated $Y$, then $W$ would satisfy the same lower bounds and belong to the collection, contradicting the choice of $Y$.
Thus $Y$ is globally \PO.
Moreover, $v_i(Y_i)\ge v_i(X_i)>\tau(v_i)$, so neither $Y_i$ is \EFone{}-violating.
Hence $Y$ is also \EFone{}.

It remains to impose Pareto optimality without crossing either threshold.
Among all allocations $Z$ satisfying $v_i(Z_i) \ge v_i(X_i)$ for all $i = 1,2$,
choose an allocation $Y$ maximizing $v_1(Y_1)+v_2(Y_2)$
Such a maximizer exists because there are finitely many allocations.
If an allocation $W$ Pareto dominated $Y$, then $W$ would satisfy
the same lower bounds and $v_1(W_1)+v_2(W_2)
> v_1(Y_1)+v_2(Y_2)$,
contradicting the choice of $Y$. Thus $Y$ is globally PO.
Moreover, $v_i(Y_i) \ge v_i(X_i) > \tau(v_i)$ for $i = 1,2$, so neither $Y_i$ is \EFone{}-violating. Hence $Y$ is also \EFone{}.
\end{proof}

Thus, for two agents with strictly increasing valuations, eight is the minimum number of goods in an instance with no allocation that is both \EFone{} and \PO.
Moreover, an eight-good counterexample exists with normalized, integer-valued, strictly increasing, submodular valuations, and every \EFone{} allocation in that instance is strictly Pareto dominated.

\section{NP-Hardness with Only Eight Zero-Marginal Agent--Good Pairs}\label{sec:hardness}

We next turn from the exact two-agent threshold to the computational complexity of the three-agent problem. Chandramouleeswaran and Nimbhorkar~\cite{CN26} show that deciding whether an \EFone{} and \PO{} allocation exists is NP-hard for three agents with monotone submodular valuations. In their reduction, however, one agent assigns zero marginal value to every vertex good, so the number of zero-marginal agent--good pairs grows with the input.

We strengthen this result by confining all zero marginals to a fixed, constant-size core. Our reduction uses the eight goods from Section~\ref{sec:eight} as the core and introduces one vertex good for each vertex of the input graph. Every vertex good has strictly positive marginal value for every agent; the only zero marginals occur for agent~3 on the eight core goods. The core forces the unique split compatible with the relevant \EFone{} and \PO{} constraints, while the vertex goods encode a balanced vertex cover.

\begin{theorem}[Hardness with only eight zero-marginal pairs]\label{thm:hardness}
It is NP-hard to decide whether a three-agent instance with normalized, integer-valued, monotone submodular valuations admits an allocation that is both \EFone{} and \PO.
This remains true under the following restriction.
There is a fixed set $C$ of eight goods such that, for every bundle $R\subseteq M$ and every good $g\in M\setminus R$:
\begin{enumerate}[label=\textup{(\roman*)},leftmargin=2.5em,itemsep=0.15em,topsep=0.3em]
\item if $g\in M\setminus C$, then $v_i(g\mid R)>0$ for every agent $i$;
\item if $g\in C$, then $v_i(g\mid R)>0$ for $i\in\{1,2\}$; and
\item if $g\in C$, then $v_3(g\mid R)=0$.
\end{enumerate}
Thus the only agent--good pairs with identically zero marginal value are the eight pairs $(3,g)$ with $g\in C$.
\end{theorem}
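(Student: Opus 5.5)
We will reduce from a balanced vertex-cover-type problem, as the statement's discussion indicates. We mirror the reduction of Chandramouleeswaran and Nimbhorkar~\cite{CN26}, but replace their gadget with the eight-good core $C=A\mathbin{\dot\cup}B$ of Section~\ref{sec:eight}. The source problem is a graph $G=(V,E)$ together with a target $k$, where the question is whether some set of $k$ vertices covers all edges; this is the NP-hard Vertex Cover problem, or its balanced variant.

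\textbf{Construction.} The good set is $M=C\cup\{g_u:u\in V\}$, with one vertex good per vertex. For agents $1$ and $2$ we set
\[
v_i(R)=v_i^{\mathrm{core}}(R\cap C)+L\cdot|R\cap V|,
\]
where $v_i^{\mathrm{core}}$ is the valuation of Theorem~\ref{thm:upper}. The weight $L$ is chosen either tiny or huge relative to the core, so that the core split governs their EF1 comparisons. Agent~3 ignores $C$ and values vertex goods through a coverage-type function. A natural choice is
\[
v_3(R)=\alpha|R\cap V|+\beta\cdot\#\{e\in E: e\cap R\ne\varnothing\},
\]
with integer weights $\alpha,\beta>0$. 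This is a positive linear term plus a coverage function, hence monotone submodular. Every vertex good has marginal at least $\alpha>0$ for agent~3. The core goods have marginal $0$ for agent~3 and positive marginal for agents~1 and~2, by Theorem~\ref{thm:upper} and the additive term. Properties (i)--(iii) and integrality then follow immediately.

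\textbf{Key steps.} (1) In any PO allocation, agent~3 receives no core good. Transferring such a good to agent~1 strictly helps agent~1 and costs agent~3 nothing, so it would be a Pareto improvement. (2) Restricted to $C$, agents~1 and~2 face exactly the instance of Section~\ref{sec:eight}. Their EF1 constraints toward each other, adjusted by the vertex goods they hold, must force a specific core count split. This split is strictly Pareto dominated inside the core, which would create incompatibility unless the vertex goods rebalance the comparisons. The weights should therefore be tuned so that EF1 for agents~1 and~2 holds exactly when each receives a prescribed number of vertex goods. (3) PO relative to agent~3 then forces agent~3's bundle to be the complement of a vertex set whose removal does not lose coverage value. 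EF1 between agent~3 and the others forces $|R_3\cap V|$ to take a target size. Together these make agent~3's vertex set, or the complementary set given to agents~1 and~2, encode a vertex cover of size $k$. (4) We then prove both directions. A cover of size $k$ yields an explicit allocation, whose EF1 we check directly and whose PO we check by a welfare or exchange argument. Conversely, an EF1+PO allocation yields a cover.

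\textbf{Main obstacle.} The hard part is step (2)/(3): calibrating $L,\alpha,\beta$ so that the Section~\ref{sec:eight} gadget plays exactly the role of the gadget in~\cite{CN26}. The vertex goods must restore EF1$+$PO at exactly one core split, and only when the graph has a suitable cover. At the same time, we must certify global PO against all reallocations, including ones that move vertex goods between agents~1/2 and agent~3. My first attempt is to take $L$ much smaller than the unit core marginals, i.e.\ to scale the core tables by a large factor $K$. Vertex goods then act only as tie-breakers for agents~1 and~2, so PO among the three agents decomposes into PO of the core split and PO of the vertex-good distribution. I would then follow the structure of~\cite[Theorem~3]{CN26} for the remaining vertex-good analysis.
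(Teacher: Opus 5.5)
Your outline leaves the decisive calibration open, and the parts you do commit to cannot be completed.

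As written, step~(2) makes the forced core split one that is Pareto dominated inside the core. But by your step~(1) agent~3 holds no core good in a \PO{} allocation, and she is indifferent to the core. So a core-only Pareto improvement between agents~1 and~2 is a global one, whatever happens to the vertex goods. Hence the surviving split must lie on the core frontier of (C1), and the vertex goods must repair an \EFone{} \emph{violation} there. At every frontier split one of agents~1,~2 has a core deficit of at least~$3$ (e.g.\ $\sigma_1(1,2)=-3$, $\sigma_2(0,5)=-3$). Your tie-breaker regime ($Ln<K$) cannot offset a deficit of $3K$, so it maps every graph to a no-instance.

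More fundamentally, agents~1 and~2 value vertex goods additively and identically, so any such repair depends on $G$ only through the counts $q_1,q_2$. Your own phrase ``a prescribed number of vertex goods'' reflects this. The graph then enters only through agent~3's coverage term. There \PO{} (swapping vertex goods with agents~1,~2 at fixed counts) only requires agent~3's set to be coverage-maximal for its size, which can always be arranged. Nothing in the outline makes existence hinge on a small vertex cover.

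The missing idea is to put the coverage term, on the core scale, into the valuation of the agent whose frontier deficit must be repaired. The paper sets
$w_1(R)=L(m\,u_1(x(R),y(R))+3\cov(R\cap P))+|R\cap P|$, $w_2(R)=Lm\,u_2(x(R),y(R))+|R\cap P|$ and $w_3(R)=L|R\cap P|$, with $L=n+2$. Vertex counts are then tie-breakers for agents~1 and~2 ($n<Lm$, $n+1<3L$), while coverage is worth up to exactly three core units to agent~1. This gives the following chain:
\begin{itemize}
\item The rows with $\sigma_2\le-3$, and the rows $(0,0),(1,0),(1,1)$ with $\sigma_1\le-15$, are beyond repair.
\item Only $(1,2)$ remains. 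There $\sigma_1=-3$ is offset precisely when agent~1's vertex goods cover all $m$ edges and agent~2's cover none.
\item Agent~3 is simply additive. Her \EFone{} comparison with agent~1 yields $2q_1+q_2\le n+1$, hence $q_1\le n/2$.
\end{itemize}
This is why the source problem must be Balanced Vertex Cover rather than Vertex Cover with an arbitrary~$k$. The forward-direction \PO{} check is likewise not a generic welfare or exchange argument. It uses the rigidity fact (C2) to pin any dominating allocation to the same core split with full coverage, after which the counts force equal utilities.
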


\begin{proof}
We reduce from \textsc{Balanced Vertex Cover}~\cite[Lemma~1]{CS06}.
An instance is a graph $G=(V,E)$ with an even number $n=|V|$ of vertices, and the question is whether $G$ has a vertex cover of size at most $n/2$.
Conitzer and Sandholm state the problem with a cover of size exactly $n/2$.
The two formulations are equivalent because any cover of size at most $n/2$ can be enlarged to size exactly $n/2$.

If $E=\varnothing$, the source instance is trivially a yes-instance. We map it to the fixed output obtained by applying the construction to a single-edge graph, which is also a yes-instance. Henceforth, assume $m=|E|\ge1$.

\proofpart{Construction.}
Let $C=A\mathbin{\dot\cup}B$ be a disjoint copy of the eight goods from Section~\ref{sec:eight}, with $|A|=3$ and $|B|=5$.
For each vertex $v\in V$, add a good $p_v$.
Let $P=\{p_v:v\in V\}$ and $M=C\mathbin{\dot\cup}P$.
We call the goods in $C$ \emph{core goods} and those in $P$ \emph{vertex goods}.

For $R\subseteq M$, define $x(R)=|R\cap A|$ and $y(R)=|R\cap B|$.

For $Q\subseteq P$, let $\cov(Q) = |\{\{u,v\}\in E:p_u\in Q\text{ or }p_v\in Q\}|$.
Thus $\cov(Q)$ is the number of graph edges covered by the vertices represented in $Q$.

Set $L=n+2$.
Since $m\ge1$, we will use $Lm>n$ and $3L>n+1$.
For every $R\subseteq M$, define
\begin{align*}
w_1(R)
&=
L(m\,u_1(x(R),y(R))+3\cov(R\cap P))
+|R\cap P|,\\
w_2(R)
&=
Lm\,u_2(x(R),y(R))
+|R\cap P|,\\
w_3(R)
&=
L|R\cap P|.
\end{align*}
These are polynomial-size descriptions, and all three values can be computed in polynomial time from $G$ and the fixed tables \eqref{eq:u1-table}--\eqref{eq:u2-table}.

\proofpart{Valuation properties.}
On the core goods, the functions $D\mapsto u_i(x(D),y(D))$ are normalized, integer-valued, strictly increasing, and submodular by Theorem~\ref{thm:upper}.
Extending them to $M$ by ignoring vertex goods preserves normalization, integrality, monotonicity, and submodularity.
The coverage function $\cov$ is normalized, monotone, and submodular: as the current set grows, a newly added vertex good can cover only fewer previously uncovered edges.
The remaining terms are additive.
Hence $w_1,w_2,w_3$ are normalized, integer-valued, monotone, and submodular.

For a vertex good $p_v$, its marginal value is
\begin{equation*}
3L\cdot
(\text{number of newly covered edges})+1
\end{equation*}
for agent~1, $1$ for agent~2, and $L$ for agent~3.
All three are positive.
For a core good $g\in C$, every core marginal in \eqref{eq:u1-table}--\eqref{eq:u2-table} is at least $1$, so whenever $g\notin R$,
\begin{equation*}
w_1(g\mid R)\ge Lm,
\quad
w_2(g\mid R)\ge Lm,
\quad
w_3(g\mid R)=0.
\end{equation*}
This proves the promised marginal restriction.

\proofpart{Two facts about the core.}
We will use the following two consequences of the fixed tables.
\begin{enumerate}[label=\textup{(C\arabic*)},leftmargin=3em,itemsep=0.8em]
\item\label{core:frontier}
For a core split $(x,y)$, agent~1 receives counts $(x,y)$ and agent~2 receives $(3-x,5-y)$.
Among the $24$ possible count splits, the Pareto-optimal utility pairs are exactly the eight rows below.
The last two columns give the core \EFone{} differences $\sigma_1,\sigma_2$ defined in the proof of Theorem~\ref{thm:upper}; $\star$ denotes an automatic comparison with an empty envied bundle.
\begin{equation*}
\begin{array}{c|c|r|r}
\text{split }(x,y)
&(u_1(x,y),u_2(3-x,5-y))
&\sigma_1(x,y)&\sigma_2(x,y)\\ \hline
(0,0)&(0,56)&-55&\star\\
(1,0)&(25,55)&-29&55\\
(1,1)&(38,54)&-15&41\\
(1,2)&(47,53)&-3&27\\
(0,5)&(53,47)&25&-3\\
(1,5)&(54,32)&29&-19\\
(2,5)&(55,17)&55&-37\\
(3,5)&(56,0)&\star&-55
\end{array}
\end{equation*}
This is a direct comparison of the constant-size list of utility pairs.

\item\label{core:rigidity}
If $D_1,D_2\subseteq C$ are disjoint and
\begin{equation*}
u_1(x(D_1),y(D_1))\ge47,
\quad
u_2(x(D_2),y(D_2))\ge53,
\end{equation*}
then $D_1$ has counts $(1,2)$, $D_2$ has counts $(2,3)$, and $D_1\mathbin{\dot\cup}D_2=C$.

Indeed, the first inequality implies either $x(D_1)=0,y(D_1)=5$, or $x(D_1)\ge1,y(D_1)\ge2$.
The first possibility leaves no $B$-goods for a disjoint bundle of $u_2$-value at least $53$.
In the second possibility, disjointness gives $x(D_2)\le2$ and $y(D_2)\le3$.
The $u_2$-table then forces $x(D_2)=2,y(D_2)=3$.
Disjointness consequently forces $x(D_1)=1,y(D_1)=2$, and the two bundles exhaust $C$.
\end{enumerate}

At the key split $(1,2)$, agent~1's core \EFone{} difference is $-3$.
After the core values are multiplied by $m$, this is a deficit of $3m$, while covering all $m$ edges contributes exactly $3m$ through the coverage term.
This is why the coefficient $3$ appears in $w_1$.

\proofpart{Forward direction.}
Assume $G$ has a vertex cover $K\subseteq V$ with $k=|K|\le n/2$.
Let $P_K=\{p_v:v\in K\}$.
Choose a core bundle $C_1\subseteq C$ containing one $A$-good and two $B$-goods, and let $C_2=C\setminus C_1$.
Consider the allocation $X = (X_1,X_2,X_3)$ defined by
\begin{equation*}
X_1=C_1\cup P_K,
\quad
X_2=C_2,
\quad
X_3=P\setminus P_K.
\end{equation*}
Since $K$ covers all $m$ edges, the utilities are
\begin{equation*}
    w_1(X_1)=50Lm+k,
    \quad
    w_2(X_2)=53Lm,
    \quad
    w_3(X_3)=L(n-k).
\end{equation*}
We check all six ordered \EFone{} comparisons.
\begin{itemize}
    \item For agent~1's comparison with agent~2, deleting any core good from $C_2$ leaves a core bundle of $u_1$-value $50$, so $w_1(X_1)\ge50Lm=w_1(X_2\setminus\{g\})$.
    \item For agent~2's comparison with agent~1, delete the unique $A$-good $g\in C_1$.
Then $w_2(X_1\setminus\{g\})
= 26Lm+k
< 53Lm
= w_2(X_2)$, 
where the strict inequality follows from $k\le n<Lm$.
    \item Since $m\ge1$, every vertex cover is nonempty.
Agent~3 can therefore delete a vertex good from $P_K$, after which she values agent~1's bundle at $L(k-1)$.
Her own value satisfies $L(n-k)\ge L(k-1)$ because $k\le n/2$.
The remaining comparisons hold without any deletion: $w_1(X_3)\le3Lm+n<50Lm+k$, $w_2(X_3)\le n<53Lm$, and $w_3(X_2)=0\le L(n-k)$.
\end{itemize}
Thus $X$ is \EFone{}.

We next show that $X$ is \PO.
Suppose an allocation $Y$ Pareto dominates $X$.
To give agent~2 utility at least $53Lm$, the core part of $Y_2$ must have $u_2$-value at least $53$: with core value at most $52$, even all $n$ vertex goods yield at most $52Lm+n<53Lm$.
Similarly, the core part of $Y_1$ must have $u_1$-value at least $47$.
If its core value were at most $46$, then even full edge coverage and all $n$ vertex goods would give at most
\begin{equation*}
L(46m+3m)+n
=
49Lm+n
<
50Lm+k.
\end{equation*}
Core fact~\ref{core:rigidity} therefore forces $Y_1\cap C$ to have counts $(1,2)$, $Y_2\cap C$ to have counts $(2,3)$, and agent~3 to receive no core good.

Let $Q_i'=Y_i\cap P$ and $q_i'=|Q_i'|$.
Agent~1 must have full edge coverage in $Y$.
If $\cov(Q_1')\le m-1$, then
\begin{equation*}
w_1(Y_1)
\le
L(47m+3m-3)+n
<
50Lm+k,
\end{equation*}
where the last inequality follows from $3L>n+1$ and $k\ge1$.
Hence $\cov(Q_1')=m$, and agent~1's utility comparison implies $q_1'\ge k$.
Agent~3's comparison implies $q_3'\ge n-k$.
Since the vertex goods are partitioned, $q_1'+q_2'+q_3'=n$.
It follows that
\begin{equation*}
q_1'=k,
\quad
q_2'=0,
\quad
q_3'=n-k.
\end{equation*}
All three agents then have exactly the same utility in $Y$ as in $X$, contradicting the requirement that a Pareto domination be strict for at least one agent.
Thus $X$ is \PO.

\proofpart{Reverse direction.}
Assume the constructed instance has an allocation $X=(X_1,X_2,X_3)$ that is both \EFone{} and \PO.

Agent~3 receives no core good.
Otherwise, moving one such good from agent~3 to agent~1 would leave agents~2 and~3 unchanged and would strictly improve agent~1, because agent~1 has positive marginal value for every core good.
This would contradict \PO.
Hence agents~1 and~2 partition the core goods.
Their core split must be Pareto optimal for $u_1,u_2$; otherwise, changing only the core allocation would Pareto improve the full allocation.
By core fact~\ref{core:frontier}, the split is one of the eight displayed rows.

We first eliminate every split listed in~\ref{core:frontier} with
$\sigma_2<0$.
In each such row, $\sigma_2\le-3$.
For every $g\in X_1$, the core part of $X_1\setminus\{g\}$ has $u_2$-value at least three more than agent~2's own core bundle.
For a core good this follows from the definition of $\sigma_2$; for a vertex good the core bundle is unchanged and hence is no smaller than any one-core-good deletion.
The vertex-count terms can reduce the resulting difference by at most $n$.
Therefore
\begin{equation*}
w_2(X_1\setminus\{g\})-w_2(X_2)
\ge
3Lm-n
>
0
\end{equation*}
for every $g\in X_1$, so agent~2 cannot satisfy \EFone{} at any of these rows.

At the three splits $(0,0),(1,0),(1,1)$ listed in~\ref{core:frontier}, agent 1's core EF1 difference is at most $-15$.
For every $g\in X_2$, the coverage term can increase $w_1(X_1)-w_1(X_2\setminus\{g\})$ by at most $3Lm$, and the vertex-count term can increase it by at most $n$.
Consequently,
\begin{equation*}
w_1(X_1)-w_1(X_2\setminus\{g\})
\le
-15Lm+3Lm+n
<
0.
\end{equation*}
Agent~1 therefore cannot satisfy \EFone{} at these rows either.
The only remaining core split is $(1,2)$: agent~1 receives one $A$-good and two $B$-goods, and agent~2 receives the other two $A$-goods and three $B$-goods.

For $i\in\{1,2,3\}$, define $Q_i=X_i\cap P$, $q_i=|Q_i|$, and $c_i=\cov(Q_i)$.
Agent~1's own utility is $w_1(X_1)=L(47m+3c_1)+q_1$.
For every $g\in X_2$,
\begin{equation*}
w_1(X_2\setminus\{g\})
\ge
L(50m+3c_2)+q_2-1.
\end{equation*}
Indeed, if $g$ is a core good, the remaining core bundle has $u_1$-value $50$, while coverage and the vertex count are unchanged.
If $g$ is a vertex good, the core value remains $53$, coverage falls by at most $m$, and the vertex count falls by one; hence $53m+3(c_2-m)=50m+3c_2$.

Agent~1's \EFone{} condition provides some $g\in X_2$ such that $w_1(X_1)\ge w_1(X_2\setminus\{g\})$.
Combining the above facts gives us
\begin{equation*}
3L(c_1-c_2-m)\ge q_2-q_1-1.
\end{equation*}
Since $c_1\le m$ and $c_2\ge0$, unless $c_1=m$ and $c_2=0$, the integer $c_1-c_2-m$ is at most $-1$.
The left-hand side would then be at most $-3L$, whereas the right-hand side is at least $-n-1$.
This contradicts $3L>n+1$.
Therefore $c_1=m$ and $c_2 = 0$.
The vertices represented by $Q_1$ consequently form a vertex cover of $G$.

Finally, agent~3's \EFone{} comparison with agent~1 gives $q_3\ge q_1-1$.
The set $Q_1$ is nonempty because it covers a graph with at least one edge.
Deleting a vertex good from $X_1$ leaves agent~3 value $L(q_1-1)$, while deleting a core good leaves value $Lq_1$; hence the best deletion for agent~3 is a vertex good.
Using $q_1+q_2+q_3=n$, we obtain $2q_1+q_2\le n+1$.
Since $n$ is even and $q_2\ge0$, this implies $q_1\le n/2$.
Thus $G$ has a vertex cover of size at most $n/2$, completing the reduction.
\end{proof}

Note that Theorem~\ref{thm:hardness} does not establish NP-hardness when all three valuations are strictly increasing, since agent~3 assigns zero marginal value to the eight core goods. Rather, it eliminates all input-dependent zero marginals: every vertex good has positive marginal value for every agent, and the only zero marginals are the eight fixed pairs $(3,g)$ with $g \in C$. In contrast, in the reduction of Chandramouleeswaran and Nimbhorkar~\cite[Section~5]{CN26}, one agent assigns zero marginal value to every vertex good, so the number of such pairs grows with the graph.

\section{Conclusion}

We determine the exact two-agent threshold for incompatibility between \EFone{} and \PO{} under strictly increasing valuations. Every instance with at most seven goods admits an allocation satisfying both properties, without any submodularity assumption. In contrast, with eight goods, we construct normalized, integer-valued, strictly increasing, submodular valuations for which every \EFone{} allocation is strictly Pareto dominated. Thus, strictly positive marginal values do not restore compatibility in general, but they rule out every smaller two-agent counterexample.

We also strengthen the three-agent NP-hardness frontier for monotone submodular valuations. In our reduction, every input-dependent vertex good has strictly positive marginal value for every agent, while all zero marginals are confined to one agent on the eight fixed core goods. This leaves two natural directions for future work: whether NP-hardness persists when all three valuations are strictly increasing, and, for three or more agents, the minimum number of goods required for \EFone{} and \PO{} to be incompatible under strictly increasing valuations.

\bibliographystyle{plainnat}
\bibliography{bib}

\begin{thebibliography}{18}
\providecommand{\natexlab}[1]{#1}
\providecommand{\url}[1]{\texttt{#1}}
\expandafter\ifx\csname urlstyle\endcsname\relax
  \providecommand{\doi}[1]{doi: #1}\else
  \providecommand{\doi}{doi: \begingroup \urlstyle{rm}\Url}\fi

\bibitem[Amanatidis et~al.(2023)Amanatidis, Aziz, Birmpas, Filos-Ratsikas, Li, Moulin, Voudouris, and Wu]{AmanatidisABFLMVW23}
Georgios Amanatidis, Haris Aziz, Georgios Birmpas, Aris Filos-Ratsikas, Bo~Li, Herv\'{e} Moulin, Alexandros~A. Voudouris, and Xiaowei Wu.
\newblock Fair division of indivisible goods: Recent progress and open questions.
\newblock \emph{Artificial Intelligence}, 322\penalty0 (C), 2023.

\bibitem[Aziz et~al.(2022)Aziz, Caragiannis, Igarashi, and Walsh]{aziz2022goodschores}
Haris Aziz, Ioannis Caragiannis, Ayumi Igarashi, and Toby Walsh.
\newblock Fair allocation of indivisible goods and chores.
\newblock \emph{Autonomous Agents and Multi-Agent Systems}, 36\penalty0 (3), 2022.

\bibitem[Barman and Suzuki(2026)]{BarmanSuzuki2026}
Siddharth Barman and Mashbat Suzuki.
\newblock Compatibility of fairness and {Nash} welfare under subadditive valuations.
\newblock In \emph{Proceedings of the 2026 Annual ACM-SIAM Symposium on Discrete Algorithms (SODA)}, pages 1724--1746, 2026.

\bibitem[Barman et~al.(2018)Barman, Krishnamurthy, and Vaish]{BarmanKV18}
Siddharth Barman, Sanath~Kumar Krishnamurthy, and Rohit Vaish.
\newblock Finding fair and efficient allocations.
\newblock In \emph{Proceedings of the 19th ACM Conference on Economics and Computation (EC)}, pages 557--574, 2018.

\bibitem[Benabbou et~al.(2021)Benabbou, Chakraborty, Igarashi, and Zick]{BCIZ21}
Nawal Benabbou, Mithun Chakraborty, Ayumi Igarashi, and Yair Zick.
\newblock Finding fair and efficient allocations for matroid rank valuations.
\newblock \emph{ACM Transactions on Economics and Computation}, 9:\penalty0 1--41, 2021.

\bibitem[Bhaskar et~al.(2021)Bhaskar, Sricharan, and Vaish]{BhaskarEtAl2021}
Umang Bhaskar, A.~R. Sricharan, and Rohit Vaish.
\newblock On approximate envy-freeness for indivisible chores and mixed resources.
\newblock In \emph{Approximation, Randomization, and Combinatorial Optimization. Algorithms and Techniques (APPROX/RANDOM)}, pages 1:1--1:23, 2021.

\bibitem[Brandl et~al.(2026)Brandl, Suksompong, and Teh]{Brandl2026}
Florian Brandl, Warut Suksompong, and Nicholas Teh.
\newblock Fair division with binary valuations: Characterizations.
\newblock In \emph{Proceedings of the 19th International Symposium on Algorithmic Game Theory (SAGT)}, 2026.
\newblock Extended version available at arXiv:2607.10064.

\bibitem[Budish(2011)]{Budish2011}
Eric Budish.
\newblock The combinatorial assignment problem: Approximate competitive equilibrium from equal incomes.
\newblock \emph{Journal of Political Economy}, 119\penalty0 (6):\penalty0 1061--1103, 2011.

\bibitem[Caragiannis et~al.(2019)Caragiannis, Kurokawa, Moulin, Procaccia, Shah, and Wang]{CaragiannisKMPPSW19}
Ioannis Caragiannis, David Kurokawa, Hervé Moulin, Ariel~D. Procaccia, Nisarg Shah, and Junxing Wang.
\newblock The unreasonable fairness of maximum {N}ash welfare.
\newblock \emph{ACM Transactions on Economics and Computation}, 7\penalty0 (3):\penalty0 12:1--12:32, 2019.

\bibitem[Chandramouleeswaran and Nimbhorkar(2026)]{CN26}
Harish Chandramouleeswaran and Prajakta Nimbhorkar.
\newblock Nonexistence of simultaneously {EF1} and pareto optimal allocations for submodular valuations.
\newblock \emph{arXiv preprint arXiv:2607.18220}, 2026.

\bibitem[Conitzer and Sandholm(2006)]{CS06}
Vincent Conitzer and Tuomas Sandholm.
\newblock Computing the optimal strategy to commit to.
\newblock In \emph{Proceedings of the 7th ACM Conference on Electronic Commerce (EC)}, pages 82--90, 2006.

\bibitem[Hosseini et~al.(2025)Hosseini, Narang, and W{\k{a}}s]{HosseiniEtAl2025}
Hadi Hosseini, Shivika Narang, and Tomasz W{\k{a}}s.
\newblock Fair distribution of delivery orders.
\newblock \emph{Artificial Intelligence}, 347:\penalty0 104389, 2025.

\bibitem[Lim et~al.(2026)Lim, Neoh, and Teh]{Lim2026EFXcost}
Eugene Lim, Tzeh~Yuan Neoh, and Nicholas Teh.
\newblock The cost of {EFX}: Generalized-mean welfare and complexity dichotomies with few surplus items.
\newblock \emph{arXiv preprint arXiv:2601.12849}, 2026.

\bibitem[Lipton et~al.(2004)Lipton, Markakis, Mossel, and Saberi]{LiptonMMS04}
Richard~J. Lipton, Evangelos Markakis, Elchanan Mossel, and Amin Saberi.
\newblock On approximately fair allocations of indivisible goods.
\newblock In \emph{Proceedings of the 5th ACM Conference on Electronic Commerce (EC)}, pages 125--131, 2004.

\bibitem[Mackenzie and Suzuki(2026)]{MS26}
Simon Mackenzie and Mashbat Suzuki.
\newblock When one good is not enough: {EF1} and pareto optimality are not compatible for submodular valuations.
\newblock \emph{arXiv preprint arXiv:2607.17811}, 2026.

\bibitem[Mahara(2026)]{Mahara2026}
Ryoga Mahara.
\newblock Existence of fair and efficient allocation of indivisible chores.
\newblock In \emph{Proceedings of the 37th ACM-SIAM Symposium on Discrete Algorithms (SODA)}, pages 6742--6766, 2026.

\bibitem[Plaut and Roughgarden(2020)]{plaut2020almost}
Benjamin Plaut and Tim Roughgarden.
\newblock Almost envy-freeness with general valuations.
\newblock \emph{SIAM Journal on Discrete Mathematics}, 34\penalty0 (2):\penalty0 1039--1068, 2020.

\bibitem[Teh(2026)]{Teh2026}
Nicholas Teh.
\newblock Computing fair and efficient indivisible chore allocations with bounded surplus.
\newblock In \emph{Proceedings of the 19th International Symposium on Algorithmic Game Theory (SAGT)}, 2026.

\end{thebibliography}

\appendix

\section{Marginal Arrays for the Eight-Good Instance}\label{app:marginals}

For rows indexed by $x$ and columns indexed by $y$, the $A$-marginal arrays are
\begin{equation}
\begin{array}{c|rrrrrr}
\Delta_A u_1&0&1&2&3&4&5\\ \hline
x=0&25&25&21&11&7&1\\
x=1&3&3&3&3&1&1\\
x=2&1&1&1&1&1&1
\end{array}
\quad
\begin{array}{c|rrrrrr}
\Delta_A u_2&0&1&2&3&4&5\\ \hline
x=0&17&15&13&7&3&3\\
x=1&15&11&7&7&1&1\\
x=2&15&11&7&1&1&1
\end{array}.
\label{eq:a-marginals}
\end{equation}
The $B$-marginal arrays are
\begin{equation}
\begin{array}{c|rrrrr}
\Delta_B u_1&0&1&2&3&4\\ \hline
x=0&13&13&13&7&7\\
x=1&13&9&3&3&1\\
x=2&13&9&3&1&1\\
x=3&13&9&3&1&1
\end{array}
\quad
\begin{array}{c|rrrrr}
\Delta_B u_2&0&1&2&3&4\\ \hline
x=0&13&13&13&11&1\\
x=1&11&11&7&7&1\\
x=2&7&7&7&1&1\\
x=3&3&3&1&1&1
\end{array}.
\label{eq:b-marginals}
\end{equation}
Every entry is positive.
Within each array, every row and every column is nonincreasing, verifying the hypotheses of Lemma~\ref{lem:count-criterion}.

\section{\EFone{} Comparison Differences}\label{app:ef1-differences}

Recall that $d_1(x,y)$ is the minimum value, according to agent~1, of agent~2's bundle after one feasible deletion, and $d_2(x,y)$ is defined symmetrically for agent~2.
For nonautomatic comparisons,
\begin{equation*}
\sigma_1(x,y)=u_1(x,y)-d_1(x,y),
\quad
\sigma_2(x,y)=u_2(3-x,5-y)-d_2(x,y).
\end{equation*}
Denote $\star$ when the comparison is automatic because the envied bundle is empty.
Then
\begin{equation}
\begin{array}{c|rrrrrr}
\sigma_1&y=0&y=1&y=2&y=3&y=4&y=5\\ \hline
x=0&-55&-41&-25&-3&17&25\\
x=1&-29&-15&-3&9&25&29\\
x=2&-25&-5&11&27&41&55\\
x=3&-17&3&25&41&55&\star
\end{array}
\label{eq:sigma1-table}
\end{equation}
and
\begin{equation}
\begin{array}{c|rrrrrr}
\sigma_2&y=0&y=1&y=2&y=3&y=4&y=5\\ \hline
x=0&\star&55&41&27&11&-3\\
x=1&55&41&27&7&-7&-19\\
x=2&37&25&7&-7&-25&-37\\
x=3&19&11&-7&-27&-41&-55
\end{array}.
\label{eq:sigma2-table}
\end{equation}
The pairs $(x,y)$ for which both relevant differences are nonnegative are exactly the four splits in \eqref{eq:ef1-splits}.

\end{document}